\documentclass[conference]{IEEEtran}
\usepackage{amssymb}
\usepackage{amsmath}
\usepackage{amsfonts}
\usepackage{graphicx}
\usepackage{algorithm}
\usepackage{algorithmic}
\usepackage{cite}
\usepackage{epstopdf}

\setcounter{MaxMatrixCols}{10}
%TCIDATA{OutputFilter=LATEX.DLL}
%TCIDATA{Version=5.50.0.2953}
%TCIDATA{<META NAME="SaveForMode" CONTENT="1">}
%TCIDATA{BibliographyScheme=Manual}
%TCIDATA{Created=Friday, February 19, 2010 18:08:29}
%TCIDATA{LastRevised=Sunday, March 18, 2012 23:54:59}
%TCIDATA{<META NAME="GraphicsSave" CONTENT="32">}
%TCIDATA{<META NAME="DocumentShell" CONTENT="Articles\SW\IEEE Transactions for Conferences">}
%TCIDATA{Language=American English}
%TCIDATA{CSTFile=IEEEtran.cst}

\newtheorem{theorem}{Theorem}

\newtheorem{proposition}[theorem]{\textbf{Proposition}}

\IEEEoverridecommandlockouts
\begin{document}
\title{Opportunistic Channel Sharing in Stochastic Networks with Dynamic Traffic}
\author{Bin Xia$^{\dag}$, Yingbin Liu$^{\dag}$, Chenchen Yang$^{\dag}$, Zhiyong Chen$^{\dag}$, Weiliang Xie$^{\ddag}$ and Yong Zhao$^{\ddag}$\\
$^\dag$Department of Electronic Engineering, Shanghai Jiao Tong University, Shanghai, P. R. China\\
$^{\ddag}$ Technology Innovation Center, China Telecom\\
Email: {\{bxia, tracylyb, zhanchifeixiang, zhiyongchen\}@sjtu.edu.cn,  \{xiewl, zhaoyong\}@ctbri.com.cn}
}
\maketitle
\begin{abstract}
%we propose to consider the stochastic network with dynamic traffic. Unlike most previous papers which assume all the nodes participate in transmission,
In this paper, we consider the stochastic network with dynamic traffic. The spatial distribution of access points (APs) and users are first modeled as mutually independent Poisson point processes (PPPs).  Different from most previous literatures which assume all the APs are fully loaded, we consider the fact that APs having no data to transmit do not generate interference to users. The APs opportunistically share the channel according to the existence of the packet to be transmitted and the proposed interference suppression strategy. In the interference suppression region, only one AP can be active at a time  to transmit the packet on the channel and the other adjacent APs keep silent to reduce serious interference. The idle probability of any AP, influenced by the traffic load and availability of the channels, is analyzed.  The density of simultaneously active APs in the network is obtained  and the packet loss rate is further elaborated. We reveal the impacts of network features (e.g., AP density, user density and channel state) and service features (e.g., user request, packet size) on the network performance. Simulation results validate our proposed model.
\end{abstract}
\begin{keywords}
Stochastic network, dynamic traffic, channel sharing, Poisson point process, packet loss rate.
\end{keywords}
\section{Introduction}
%Due to the tremendously increasing mobile data traffic and wireless communication devices, we need to better utilize the available spectrum and enhance the spectrum efficiency \cite{paper5}. Cognitive radio (CR) is proved to be a powerful technology to utilize the maximum available bandwidth, and spectrum sensing is the key technique in CR. The authors in paper \cite{paper6} investigate a spectrum sensing framework and develop models for bounding the complementary cumulative distribution function of interference levels. \cite{paper7} introduces spectrum agent to perform spectrum sensing tasks in a novel spectrum sensing framework. The optimization problem of cooperative spectrum sensing in CR is elaborated in \cite{paper8}. Compared with previous works, in this paper, we contribute to investigating the CR network with interference suppression mechanism. In order to reduce the interference caused by the access points (APs) nearby, a minimum distance between simultaneously active APs is guaranteed. It is achieved by sensing the available channel within a certain range before transmitting packets. If the sensing channel is reported occupied, the AP will postpone its transmission until the channel is available.

Due to the tremendous increase of wireless communication devices, cellular systems are facing an increasing demand for mobile data traffic. The dense deployment of  access points (APs) is one of the solutions \cite{paper14}. To analyze the performance of the cellular networks, Poisson Point Process (PPP) is a feasible tool of stochastic geometry  to model the positions of different nodes, e.g., APs and users. \cite{paper15} and \cite{paper16} characterize  the accuracy and tractability of PPP, while all cells are assumed to be fully loaded. Based on this assumption, related works have been conducted on stochastic networks. For example, \cite{paper18} proposes a PPP-based flexible and tractable model for multi-tier multiple-input-multiple-output heterogeneous networks, and the downlink coverage probability is evaluated. However, in previous works, under the ideal assumption of fully loaded cells, the interfering signals of a user come from all other APs in the network except for its serving AP.  In fact, APs having no data to  transmit do not generate interference to the user. %Due to the dynamic traffic, not all the APs have data to transmit, especially in light loaded cells. Therefore, the assumption of full loaded cells is not reasonable.

Moreover, as APs transmit wireless signals simultaneously, interference degrades the system performance and should be reduced as much as possible. \cite{paper13} derives the signal-to-interference-plus-noise ratio (SINR) for the multi-cell with new general models. Statistical SINR is analyzed with the PPP approach. The authors in \cite{paper19} analyze higher order moments of SINR in a PPP-based heterogeneous interference field. However, in previous works,  one channel is assumed to be shared by all the fully loaded APs in the network all the time. The interference from the adjacent APs is serious, especially for the high-density network. %In addition, the authors model the cache-enabled networks, analyzing the packet loss rate (PLR) with interference cancellation in \cite{paper20}. The buffer size is assumed to be infinite, which does not consider the influence on performance due to the limited storage.

%In this letter, we investigate the stochastic network with dynamic traffic. The idle probability of an arbitrary AP is obtained. Based on this, the simultaneously active APs can be modeled. In addition, we consider the opportunistic spectrum sharing among different APs, and a region with a fixed radius is introduced to reduce serious interference.

In this paper, we consider opportunistic channel sharing. We propose an interference suppression region where only one AP can be active to transmit signals on the channel, to avoid serious interference from the adjacent APs. The interference model for the network with dynamic traffic is studied and the performance in terms of the packet loss rate (PLR) is derived. The main contributions are summarized as follows,
\begin{itemize}
\item \emph{An interference suppression strategy:} A region with a fixed radius for each AP is introduced to reduce serious interference from the adjacent APs. In the region of each AP, only one AP can transmit its packets on the channel and the other adjacent APs  keep silent without generating  interference. The probability that an AP fails to obtain the channel access opportunity to transmit the packet because of the interference suppression strategy is analyzed. Based on the probability, the PLR due to this factor is further derived.
\item \emph{Stochastic network with dynamic traffic:} The probability that an arbitrary AP is idle due to two factors (having no packets to transmit and failing to obtain the channel  access opportunity) is analyzed. Based on this, the simultaneously active APs are modeled as thinning  PPP to evaluate the interference under different traffic load. %When calculating the interference, the idle APs have no packet to transmit, hence causing no interference to users.
    In addition, due to the limited buffer size, packets will be dropped if  they are not transmitted in time. The PLR due to this factor is analyzed.
\end{itemize}

The remainder of this paper is organized as follows. The system model is introduced in Section  \uppercase\expandafter{\romannumeral2}. The AP-idle probability and PLR  are formulated in Section  \uppercase\expandafter{\romannumeral3}. Simulation results are provided in Section  \uppercase\expandafter{\romannumeral4}. Section  \uppercase\expandafter{\romannumeral5} concludes this paper.

\section{system model} \label{sys}
\subsection{Network Topology and Interference Suppression}
Consider that the APs and users are distributed in a large-scale 2-D plane according to two mutually independent homogeneous PPPs $\Phi_i = \{x_{i,j},j=1,2,3,\cdots\}$ with intensity $\lambda_i$, where $x_{i,j}$ denotes the position of the $j^{th}$ element of tier $i$ ($i=1,2$ for users and APs, respectively). Each user associates with the closest AP, namely the AP cells are polygonal and form the Voronoi tessellation on the plane.

One channel is shared in the network, and an interference suppression region is employed to avoid co-channel interference from adjacent APs. The interference suppression region is centered on each AP with a fixed radius $R$. In each interference suppression region, the active APs opportunistically share and compete for the channel with equal probability. Only one AP can transmit its packets at a time, and the other APs failing to access the channel in this region keep silent. Therefore, the simultaneously active APs have a minimum distance $R$, which can greatly reduce the interference. Accordingly, the area of the interference suppression region is $A=\pi R^2$.

\subsection{Transmission Protocol}
Consider a time-slotted system with slot duration $\tau$ $[$seconds$]$. We assume all the packets have the same size $\it{T}$ $[$bits$]$ and in each slot the AP is scheduled to transmit one packet. The packet request rate of each user is considered to be independent identically distributed. New request arrivals during a slot can not be delivered to the user until the beginning of the next slot. There are the following three cases when new packet requests arrive at the AP in a slot\footnote{Note: In this paper, we mainly focus on the state that there is no packet available at the AP. In the case where the buffer size is greater than 1, we can use the same method to derive the solution, and the procedures and results are omitted in the paper for simplicity.},
\begin{itemize}
\item \emph{Case 1:} There is no packet to be transmitted by the AP and only one packet request arrives. The new coming packet request will not be served until the AP obtains the available channel in the following slots.
\item \emph{Case 2:} There is no packet to be transmitted by the AP and more than one packet requests arrive. In this case only one packet request will be retained at the AP and other packet requests will be dropped. The retained request will be served when the AP obtains the available channel.
\item \emph{Case 3:} There is one packet to be transmitted by the AP. In this case, it will prevent other packet requests from entering the AP and all the arriving packet requests will be dropped.
\end{itemize}
If there is no packet waiting to be transmitted by the AP, the AP will not participate in competing for the channel.

\subsection{Node State Transition Model}
\begin{figure}[t]
\centering
\includegraphics[width=7cm]{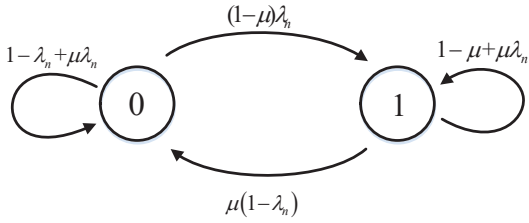}
\caption{Markov chain of the AP state.}
\label{model}
\end{figure}

Assume the request of each user is triggered according to Bernoulli distribution with parameter $\lambda$ in each slot, and one packet is demanded by each request. Based on the three cases aforementioned, the packet request at an AP with $n$ users is also a Bernoulli process with parameter $\lambda_n=1-(1-\lambda)^n$. Consider that a request is dropped out of the buffer at the end of the slot assigned to it. The PLR will be elaborated in Section \ref{pro}. Therefore, the Markov chain shown in Fig.\ref{model} is constructed, where state 0 and 1  respectively represent the cases that the AP has no and one packet to transmit. $\mu$ is the service rate of the AP, which also represents the probability that the AP successfully obtains the channel access opportunity. Then the node state balance equations are,
\begin{equation}
\left\{
\begin{aligned}
&(1-\mu)\lambda_n \pi_0 = \mu(1-\lambda_n)\pi_1 \\
&\pi_0+\pi_1 =1 \\
\end{aligned}
\right.
\end{equation}
where $\pi_0$ is the AP-empty probability indicating that there is no packet to be transmitted at the AP, and $\pi_1$ is the probability that there is one packet to be transmitted. Therefore,
\begin{equation}\label{eq9}
\mu=\frac{\lambda_n\pi_0}{\lambda_n \pi_0+(1-\lambda_n)(1-\pi_0)}.
\end{equation}
\begin{proposition}
The expectation of $\mu$ can be calculated by
\begin{equation}\label{eq2}
\mathbb{E}(\mu)=\sum\limits_{n=0}^{\infty}\frac{\frac{\lambda_1^n (K\lambda_2)^K}{(\lambda_1+K\lambda_2)^{K+n}}\frac{\Gamma(K+n)}{\Gamma(n+1)\Gamma(K)}}{1+\frac{1-\pi_0}{\pi_0}\big[\frac{1}{1-(1-\lambda)^n}-1\big]}.
\end{equation}
\end{proposition}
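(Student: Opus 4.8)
The plan is to compute $\mathbb{E}(\mu)$ by conditioning on $n$, the number of users served by a typical AP: write $\mathbb{E}(\mu)=\sum_{n=0}^{\infty}\mu(n)\,P(n)$, where $\mu(n)$ is the service rate \eqref{eq9} evaluated at a cell with $n$ users (holding $\pi_0$ fixed as in \eqref{eq9}) and $P(n)$ is the law of $n$. There are then two ingredients: an algebraic rewriting of \eqref{eq9} that reproduces the summand's denominator, and the distribution of $n$ that reproduces its numerator.

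For the first, I would start from \eqref{eq9}, divide numerator and denominator by $\lambda_n\pi_0$, and use $\frac{1-\lambda_n}{\lambda_n}=\frac{1}{\lambda_n}-1$ together with $\lambda_n=1-(1-\lambda)^n$ to obtain
\[
\mu(n)=\frac{1}{1+\frac{1-\pi_0}{\pi_0}\big[\frac{1}{1-(1-\lambda)^n}-1\big]},
\]
which is exactly the denominator of the general term in \eqref{eq2}. (For $n=0$ we have $\lambda_0=0$, hence $\mu(0)=0$, consistent with an AP serving no users never competing for the channel; this term contributes nothing to the sum.)

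For the distribution of $n$, I would invoke the standard description of the typical cell of a Poisson--Voronoi tessellation: the cell of a typical point of $\Phi_2$ has area $S$ with mean $1/\lambda_2$, whose law is accurately approximated by a Gamma density $f_S(s)=\frac{(K\lambda_2)^K}{\Gamma(K)}s^{K-1}e^{-K\lambda_2 s}$ with the empirically fitted shape parameter $K$. Since $\Phi_1$ is an independent homogeneous PPP of intensity $\lambda_1$, conditioned on $S=s$ the number of users landing in the cell is Poisson with mean $\lambda_1 s$, so de-conditioning gives
\[
P(n)=\int_{0}^{\infty}\frac{(\lambda_1 s)^n e^{-\lambda_1 s}}{\Gamma(n+1)}\cdot\frac{(K\lambda_2)^K s^{K-1}e^{-K\lambda_2 s}}{\Gamma(K)}\,\mathrm{d}s .
\]
Evaluating this Gamma integral produces the negative-binomial-type mass function
\[
P(n)=\frac{\Gamma(K+n)}{\Gamma(n+1)\Gamma(K)}\,\frac{\lambda_1^n(K\lambda_2)^K}{(\lambda_1+K\lambda_2)^{K+n}},
\]
which is precisely the numerator in \eqref{eq2}. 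Substituting the two ingredients into $\mathbb{E}(\mu)=\sum_{n=0}^{\infty}\mu(n)P(n)$ yields \eqref{eq2}.

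The only step that is not routine is the one drawing on Voronoi geometry: the exact cell-area distribution of a Poisson--Voronoi tessellation has no closed form, so the result rests on the customary Gamma fit for $S$ (equivalently, on modeling the per-cell user count as negative binomial). I would therefore flag this as the single explicit modeling approximation, and carry out the mixed-Poisson integral above carefully to pin down the normalization constants; the remaining manipulations of \eqref{eq9} are exact.
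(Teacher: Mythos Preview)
Your approach is correct and essentially identical to the paper's: condition on $n$, rewrite \eqref{eq9} algebraically to get the denominator, and substitute the negative-binomial PMF $P_n$ for the numerator. The only difference is that you derive $P_n$ explicitly via the Gamma-area mixed-Poisson integral, whereas the paper simply quotes it from a reference; your added caveat about the Gamma fit being a modeling approximation is apt and worth retaining.
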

\begin{proof}
Based on (\ref{eq9}),  the expectation of $\mu$, with respect to the number of users associated with a typical AP, is
\begin{equation}\label{eq4}
\begin{split}
\mathbb{E}_n(\mu)&=\mathbb{E}_n\Big[\frac{\lambda_n\pi_0}{\lambda_n \pi_0+(1-\lambda_n)(1-\pi_0)}\Big] \\
&=\mathbb{E}_n\Big[\frac{1}{1+(\frac{1}{\pi_0}-1)(\frac{1}{1-(1-\lambda)^n}-1)}\Big] \\
&=\sum\limits_{n=0}^{\infty}\frac{P_n}{1+(\frac{1}{\pi_0}-1)\big[\frac{1}{1-(1-\lambda)^n}-1\big]},
\end{split}
\end{equation}
where $P_n$ is the probability that there are $n$ users associated with the typical AP. According to \cite{paper12} and with the property of PPP, the size of the Voronoi cell area is a random variable, % with probability density function (PDF),
%\begin{equation}\label{eq10}
%f_S(s)=\frac{(\lambda_2 K)^K s^{k-1} e^{-\lambda_2 Ks}}{\Gamma(K)},0 \leq s < \infty ,
%\end{equation}
%where $s$ denotes the cell size and $\Gamma(\cdot)$ is the gamma function and $K=3.575$ is a constant factor. Note that the cell size is not the same as the area of the sensing range. Then we have the following lemma,
%\begin{lemma}\label{lemma}
and the probability mass function (PMF) of the number of users $N$ associated with a typical AP cell is
\begin{equation}\label{eq5}
\begin{split}
P_n\triangleq\mathbb{P}(N=n)=\frac{\lambda_1^n (K\lambda_2)^K}{(\lambda_1+K\lambda_2)^{K+n}}\frac{\Gamma(K+n)}{\Gamma(n+1)\Gamma(K)}.
\end{split}
\end{equation}
Based on (\ref{eq4}) and (\ref{eq5}), we have the expectation of $\mu$ in (\ref{eq4}).
\end{proof}
%\end{lemma}
%\begin{proof}
%With the property of PPP, conditioning on the cell size, the number of users $N$ associated with the typical AP is a Poisson random variable with the PMF,
%\begin{equation}\label{eq11}
%\mathbb{P}(N=n|S=s)=e^{-\lambda_1 s}\frac{(\lambda_1 s)^n}{n!},n=1,2,\cdots.
%\end{equation}
%According to the total probability formula, we have
%\begin{equation}\label{eq12}
%\mathbb{P}(N=n)=\int_{0}^{\infty}\mathbb{P}(N=n|S=s)f_{S}(s)\mathrm{d}s.
%\end{equation}
%Substituting (\ref{eq10}) and (\ref{eq11}) to (\ref{eq12}), the proof is finished.
%\end{proof}
\section{AP-idle probability and packet loss rate}\label{pro}
In this section, we first clarify AP-idle probability. Then the PLR of the stochastic network is further elaborated.
\subsection{AP-idle Probability}
We first give another approach to obtain the expectation of $\mu$. Without loss of generality, consider that there is a typical AP located at the origin. The probability $P_{m-1}$ that there are $m-1$ other APs in the interference suppression region of the typical AP is \cite{paper10,paper1}
\begin{equation}
P_{m-1}=\frac{e^{-\lambda_2 A}(\lambda_2 A)^{m-1}}{(m-1)!}.
\end{equation}
Denote $P_{i|m}$ as the probability that there are $i$ $(1 \leq i \leq m)$ APs (including the typical AP) competing for the channel, given that there are totally $m$ APs in the interference suppression region of the typical AP. $P_{i|m}$ can be calculated as
\begin{equation}
P_{i|m}=\binom{m-1}{i-1}(1-\pi_0)^{i-1} \pi_0^{m-i}.
\end{equation}
Consider each AP has the equal probability to access the channel. Then each AP can successfully access the channel with the probability of $1/i$.
Therefore, the service rate of the AP, $\mu$, is the expectation of the probability that the AP successfully accesses the channel and is given by
\begin{align}\label{eq0}
&\mathbb{E}_m(\mu)\nonumber\\ &=\sum\limits_{m=1}^{\infty}P_{m-1}\sum\limits_{i=1}^{m}P_{i|m} \frac{1}{i}\nonumber \\
&=\sum\limits_{m=1}^{\infty}\!\sum\limits_{i=1}^{m}\frac{e^{-\lambda_2 A}(\lambda_2 A)^{m\!-\!1}}{i(m-1)!}\binom{m\!-\!1}{i\!-\!1}(1\!-\!\pi_0)^{i\!-\!1} \pi_0^{m-i} \nonumber\\
&=\sum\limits_{m=1}^{\infty}\!\sum\limits_{i=1}^{m}\frac{e^{-\lambda_2 A}(\lambda_2 A)^{m-1}\pi_0^m}{i(m\!-\!1)!(1-\pi_0)}\binom{m\!-\!1}{i\!-\!1}(1\!-\!\pi_0)^i \pi_0^{\!-i}.
\end{align}

We then have the following proposition,
\begin{proposition}\label{lemma}
The expectation of service rate $\mu$ is,
\begin{equation}\label{eq3}
\mathbb{E}(\mu)=\frac{1-e^{-\lambda_2 A}}{(1-\pi_0)\lambda_2 A}.
\end{equation}
\end{proposition}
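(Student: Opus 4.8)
The plan is to evaluate the double sum in (\ref{eq0}) in closed form. Abbreviate $a \triangleq \lambda_2 A$ and interchange the order of summation, so that for each fixed $m$ one first sums over $i=1,\dots,m$. The only non-elementary ingredient is the factor $\tfrac1i\binom{m-1}{i-1}$; the identity $\tfrac1i\binom{m-1}{i-1}=\tfrac1m\binom{m}{i}$ turns the inner sum into $\tfrac1m\sum_{i=1}^{m}\binom{m}{i}(1-\pi_0)^{i-1}\pi_0^{m-i}$. Factoring out $(1-\pi_0)^{-1}$ and adding and subtracting the $i=0$ term of the binomial expansion of $\bigl((1-\pi_0)+\pi_0\bigr)^{m}=1$, the inner sum collapses to $\dfrac{1-\pi_0^{m}}{m(1-\pi_0)}$.

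What remains is a single power series in $m$, namely $\mathbb{E}(\mu)=\dfrac{e^{-a}}{1-\pi_0}\displaystyle\sum_{m\ge1}\dfrac{a^{m-1}}{m!}\bigl(1-\pi_0^{m}\bigr)$. I would split this into the two elementary sums $\sum_{m\ge1}\tfrac{a^{m-1}}{m!}=\tfrac{e^{a}-1}{a}$ and $\sum_{m\ge1}\tfrac{a^{m-1}\pi_0^{m}}{m!}=\tfrac{e^{a\pi_0}-1}{a}$, subtract, and simplify the numerator $e^{-a}\bigl(e^{a}-e^{a\pi_0}\bigr)$ to obtain (\ref{eq3}).

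As a consistency check (and an alternative proof that sidesteps the binomial manipulation), note that the APs that ever contend for the channel form an independent $(1-\pi_0)$-thinning of $\Phi_2$ and are therefore again a PPP; conditioning the typical AP to be active, the number $J$ of \emph{other} contenders falling in its region of area $A$ is Poisson, the typical AP captures the channel with probability $1/(J+1)$, and $\mathbb{E}(\mu)=\sum_{j\ge0}\mathbb{P}(J=j)/(j+1)$ evaluates --- using $\sum_{j\ge0} b^{j}/(j+1)!=(e^{b}-1)/b$ --- to the same expression. I expect the sole real obstacle to be the first step: spotting the identity $\tfrac1i\binom{m-1}{i-1}=\tfrac1m\binom{m}{i}$ and then re-closing the binomial so that the $i$-sum telescopes; everything afterward is routine bookkeeping with the exponential series.
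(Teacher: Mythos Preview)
Your inner-sum evaluation via $\tfrac{1}{i}\binom{m-1}{i-1}=\tfrac{1}{m}\binom{m}{i}$ is correct, and so is the reduction to $\dfrac{1-\pi_0^{m}}{m(1-\pi_0)}$ and the split into two exponential series. The gap is in the very last line: $e^{-a}\bigl(e^{a}-e^{a\pi_0}\bigr)=1-e^{-a(1-\pi_0)}$, so what your computation actually delivers is
\[
\mathbb{E}(\mu)=\frac{1-e^{-(1-\pi_0)\lambda_2 A}}{(1-\pi_0)\lambda_2 A},
\]
which is \emph{not} (\ref{eq3}) unless $\pi_0=0$. Your own consistency check confirms this: with $J$ Poisson of mean $b=(1-\pi_0)\lambda_2 A$, one has $\mathbb{E}\bigl[1/(J{+}1)\bigr]=(1-e^{-b})/b$, again with the factor $(1-\pi_0)$ inside the exponent. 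So your two routes agree with each other but not with the stated proposition.

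The discrepancy traces to the paper, not to you. The paper evaluates the same inner sum by differentiating $f(x)=\sum_{i=1}^m\binom{m-1}{i-1}x^i/i$ to $(x+1)^{m-1}$ and integrating back; but since $f(0)=0$, the antiderivative must be $f(x)=\dfrac{(x+1)^m-1}{m}$, not $\dfrac{(x+1)^m}{m}$ as written in (\ref{eq1}). Carrying the missing $-1/m$ through (\ref{eq6}) restores precisely the $-e^{a\pi_0}$ term you obtained and reproduces your formula. In short, your route and the paper's are the same (collapse the $i$-sum, then sum an exponential series); you close the $i$-sum with a binomial identity while the paper uses differentiate--integrate; your arithmetic is right, and the stated closed form (\ref{eq3}) survives in the paper only because of the dropped integration constant.
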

\begin{proof}
Let $x=\frac{1-\pi_0}{\pi_0}$, and denote the second summation term of (\ref{eq0}) as $f(x)$, i.e.,
\begin{equation}
\begin{split}
f(x)&\triangleq\sum\limits_{i=1}^{m}\binom{m-1}{i-1}(1-\pi_0)^i \pi_0^{-i} \frac{1}{i} \\
&=\sum\limits_{i=1}^{m}\binom{m-1}{i-1}\frac{x^i}{i}.
\end{split}
\end{equation}
Take the derivative of $f(x)$ with respect to $x$, we have
\begin{equation}
\begin{split}
\frac{\mathrm{d}{f(x)}}{\mathrm{d}{x}}&=\sum\limits_{i=1}^{m}\binom{m-1}{i-1}x^{i-1} \\
&=\sum\limits_{i=1}^{m}\binom{m-1}{i-1}x^{i-1}\cdot 1^{m-i} \\
&\overset{(a)}{=}(x+1)^{m-1}.
\end{split}
\end{equation}
Step $(a)$ is obtained according to the binomial expansion. Taking the integral of the equation above, we get
\begin{equation}\label{eq1}
f(x)=\frac{(x+1)^m}{m}.
\end{equation}
Substituting (\ref{eq1}) in (\ref{eq0}), we obtain
\begin{equation}\label{eq6}
\begin{split}
\mathbb{E}_m(\mu) &=\sum\limits_{m=1}^{\infty}\frac{e^{-\lambda_2 A}(\lambda_2 A)^{m-1}}{(m-1)!}\frac{\pi_0^m}{1-\pi_0}\frac{(x+1)^m}{m} \\
&=\frac{e^{-\lambda_2 A}}{(1-\pi_0)\lambda_2 A}\sum\limits_{m=1}^{\infty}\frac{[\lambda_2 A \pi_0 (x+1)]^{m}}{m!} \\
&\overset{(b)}{=}\frac{e^{-\lambda_2 A}}{(1-\pi_0)\lambda_2 A}\big[e^{\lambda_2 A\pi_0(x+1)}-1\big] \\
&\overset{(c)}{=}\frac{e^{-\lambda_2 A}}{(1-\pi_0)\lambda_2 A}(e^{\lambda_2 A}-1)=\frac{1-e^{-\lambda_2 A}}{(1-\pi_0)\lambda_2 A}.
\end{split}
\end{equation}
Step $(b)$ is obtained according to the Taylor expansion. Step $(c)$ is obtained by substituting $x=\frac{1-\pi_0}{\pi_0}$ into the expression. Then the proof is finished.
\end{proof}
%\begin{equation}
%\begin{split}
%&\mathbb{E}\{\pi_0\}=\mathbb{E}\Big\{\frac{\mu(1-\lambda)^N}{1-(1-\lambda)^N+\mu(1-\lambda)^N}\Big\} \\
%&=\sum\limits_{n=1}^{\infty}\frac{\mu(1-\lambda)^n}{1-(1-\lambda)^n+\mu(1-\lambda)^n} \Bigg[\frac{\lambda_1^n (K\lambda_2)^K}{(\lambda_1+K\lambda_2)^{K+n}}\frac{\Gamma(K+n)}{\Gamma(n+1)\Gamma(K)}\Bigg] \\
%\end{split}
%\end{equation}

As (\ref{eq2}) and (\ref{eq3}) are for the expectation of $\mu$ from different perspectives, they should be equal, i.e.,
\begin{equation}\label{eq13}
\begin{split}
&\frac{1-e^{-\lambda_2 A}}{(1-\pi_0)\lambda_2 A}=\sum\limits_{n=0}^{\infty}\frac{\frac{\lambda_1^n (K\lambda_2)^K}{(\lambda_1+K\lambda_2)^{K+n}}\frac{\Gamma(K+n)}{\Gamma(n+1)\Gamma(K)}}{1+\frac{1-\pi_0}{\pi_0}\big[\frac{1}{1-(1-\lambda)^n}-1\big]}.
\end{split}
\end{equation}

With (\ref{eq13}), we obtain the value of the AP-empty probability $\pi_0$ of a randomly selected AP. Different from the AP-empty probability $\pi_0$, we denote $P_{ai}$ as the AP-idle probability, which implies the probability that an AP is not transmitting packet in a slot. It aslo indicates the percentage of APs that do not transmit packets during a slot. We shall note that the AP-idle probability is jointly decided by two factors: i) AP-empty probability, i.e., the probability that there is no request to be served by the AP; ii) the probability that there is request to be served but the AP fails to access the channel due to the  interference suppression strategy. We then have,
\begin{proposition}
The AP-idle probability $P_{ai}$ is given by
\begin{equation}\label{CC}
P_{ai}=\frac{\lambda_2 A+e^{-\lambda_2 A}-1}{\lambda_2 A}.
\end{equation}
\end{proposition}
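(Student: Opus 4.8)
The plan is to reduce the AP-idle probability to a one-line expression in the AP-empty probability $\pi_0$ and the expected service rate $\mathbb{E}(\mu)$, and then invoke Proposition \ref{lemma}. Fix a typical AP and decompose the event ``the AP is not transmitting in a given slot'' along the two disjoint causes listed just before the statement: either the AP has no packet (probability $\pi_0$), or it has a packet but loses the channel competition inside its interference-suppression region. Writing $B$ for the event that the AP has a packet and $C\subseteq B$ for the event that it actually transmits, we have $P_{ai}=1-\mathbb{P}(C)$ and $\mathbb{P}(C)=\mathbb{P}(C\mid B)\,\mathbb{P}(B)=(1-\pi_0)\,\mathbb{E}(\mu)$, since by construction the service rate $\mu$ is precisely the probability of winning the channel conditioned on the AP being in state $1$, and $\mathbb{E}(\mu)$ is the average of this over the random number and states of the APs sharing the region.

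Next I would substitute the closed form $\mathbb{E}(\mu)=\frac{1-e^{-\lambda_2 A}}{(1-\pi_0)\lambda_2 A}$ from Proposition \ref{lemma}. The factor $(1-\pi_0)$ cancels, yielding $\mathbb{P}(C)=\frac{1-e^{-\lambda_2 A}}{\lambda_2 A}$ and hence $P_{ai}=1-\frac{1-e^{-\lambda_2 A}}{\lambda_2 A}=\frac{\lambda_2 A+e^{-\lambda_2 A}-1}{\lambda_2 A}$, which is the claimed identity. As a sanity check I would verify the limits: a Taylor expansion gives $P_{ai}\to 0$ as $\lambda_2 A\to 0$ (an essentially isolated AP always transmits when it has traffic) and $P_{ai}\to 1$ as $\lambda_2 A\to\infty$, both of which are consistent with the interpretation of $P_{ai}$.

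If a self-contained derivation is preferred, the alternative is to condition directly on the number $m$ of APs in the region via $P_{m-1}$ and on the number $i$ of competitors via $P_{i\mid m}$, writing $\mathbb{P}(\text{not transmitting}\mid m)=\pi_0+(1-\pi_0)\sum_{i=1}^{m}P_{i\mid m}\bigl(1-\tfrac1i\bigr)=1-(1-\pi_0)\sum_{i=1}^{m}P_{i\mid m}/i$ and then summing against $P_{m-1}$; this reproduces exactly the series in (\ref{eq0}) and collapses by the same binomial-expansion-then-Taylor-series manipulation used in the proof of Proposition \ref{lemma}, giving the same answer.

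The one step I would be most careful about is the conditioning underlying $\mathbb{P}(C\mid B)=\mathbb{E}(\mu)$: one must ensure that the averaging over the number of APs in the interference-suppression region and over their buffer states is exactly the averaging that defines $\mathbb{E}(\mu)$, so that the factorization $\mathbb{P}(C)=(1-\pi_0)\,\mathbb{E}(\mu)$ is legitimate rather than an illegitimate product of separately averaged quantities. Once that is justified at the same mean-field level at which $P_{m-1}$, $P_{i\mid m}$ and $\mathbb{E}(\mu)$ were obtained, everything else is routine algebra.
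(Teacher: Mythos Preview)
Your proposal is correct and follows essentially the same route as the paper: decompose $P_{ai}$ as $\pi_0+(1-\pi_0)(1-\mu)=1-(1-\pi_0)\mu$, substitute the closed form for $\mathbb{E}(\mu)$ from Proposition~\ref{lemma}, and let the factor $(1-\pi_0)$ cancel. Your alternative self-contained derivation and your caveat about the conditioning $\mathbb{P}(C\mid B)=\mathbb{E}(\mu)$ are extra care beyond what the paper provides, but the core argument is identical.
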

\begin{proof}
When there is a packet to be transmitted by the AP, the probability of a successful channel access  equals to the service rate $\mu$, thereby the probability of a failed channel access is $1-\mu$. Therefore,  the probability that there is request to be served but the AP fails to access the channel is
\begin{equation}
P_{af}=\pi_1(1-\mu)=(1-\pi_0)(1-\mu).
\end{equation}
Based on the two factors of AP-idle probability, we then have
\begin{equation}
\begin{split}
P_{ai}&=\pi_0+P_{af} \\
&=1-\mu+\pi_0\mu \\
&=\frac{\lambda_2 A+e^{-\lambda_2 A}-1}{\lambda_2 A}.
\end{split}
\end{equation}
We obtain (\ref{CC}) and the proof is finished.
\end{proof}

%To further derive the AP-idle probability, we analyze the resource-contention probability $P_{rc}$, conditioning on that the typical AP is not empty. Because the probability of a successful channel preemption is equal to the service rate, the probability of a failed channel preemption is $1-\mu$, which indicates

\subsection{Packet Loss Rate}
As elaborated above, a packet needs to be completely transmitted during the slot assigned to it, otherwise it will be dropped. The successful transmission probability can be defined as,% \cite{paper20}
\begin{equation}
\begin{split}
p&\triangleq \mathbb{P}[\tau B \log_2(1+\sf{SINR}) \geq \it{T}] \\
&=\mathbb{P}(\sf{SINR} \geq 2^{\frac{\it{T}}{\it{\tau} \it{B}}}-1),
\end{split}
\end{equation}
where $\it{B}$ is the bandwidth. Denote $\bar{T} \triangleq 2^{\frac{\it{T}}{\it{\tau} \it{B}}}-1$ hereafter.
Given the AP-idle probability, the interfering APs can be modeled as thinning PPP outside the interference suppression region with intensity $(1-P_{ai})\lambda_2$. Based on \cite{paper12}, the probability that a randomly chosen user can achieve a target SINR is,
\begin{equation}
\begin{split}
%p&=\mathbb{P}(\sf{SINR} \geq \it{\bar{T}}) \\
%&=(1-P_{ai})\pi\lambda_2\int_{0}^{\infty}e^{-(1-P_{ai})\pi\lambda_2\upsilon\beta({\bar{T}},\eta)- {\bar{T}} %\sigma^{2} \upsilon^{\eta/2}}\mathrm{d}\upsilon,
p=&\mathbb{P}(\sf{SINR} \geq \it{\bar{T}}) \\
=&\int_{0}^{\infty}2\pi\lambda_2r \exp\Big\{-\lambda_2\pi r^2-(1-P_{ai})\lambda_2 \pi r^2 \sqrt{\bar{T}} \\
&\arctan\Big[\frac{r^2 \sqrt{\bar{T}}}{(r_s-r)^2}\Big]-\frac{\bar{T}r^{\eta}\sigma^2}{P}\Big\}\mathrm{d}r,
\end{split}
\end{equation}
%where
%\begin{equation}
%\beta({\bar{T}},\eta)=\frac{2 {\bar{T}}^{\frac{2}{\eta}}}{\eta}\mathbb{E}\Big[g^{\frac{2}{\eta}}(\Gamma(-2/\eta, {\bar{T}} g)-\Gamma(-2/\eta))\Big],
%\end{equation}
where $\sigma^2$ is the variance of the addictive white Gaussian noise (AWGN), and $\eta$ is the path-loss exponent ($\eta > 2$).

Packets will be lost in the following scenarios: i) more than one packet requests arrive at the empty AP; ii) one or more packet requests arrive at the AP with a packet to transmit; iii) the received  SINR falling below the threshold when a packet is transmitted. On condition that there are $j$ packet requests arriving at the AP in one slot, when there is no packet being transmitted by the AP, the PLR of a random packet is $\frac{j-1}{j}$. When there is one packet being transmitted by the AP, if the typical AP successfully accesses the channel, the PLR is $1-p+p\frac{j}{j+1}$; otherwise, the PLR is $\frac{j}{j+1}$. Therefore, we obtain the average PLR of the whole network, which is calculated by (\ref{eq14}) in the next page.
\begin{figure*}[t]
\begin{align}\label{eq14}
p_{l}=&\pi_0\Bigg[\sum\limits_{i=1}^{\infty} P_i\sum\limits_{j=1}^{i}\binom{i}{j}\lambda^j(1-\lambda)^{i-j}\frac{j-1}{j}\Bigg]
+\pi_1\Bigg\{\sum\limits_{i=1}^{\infty} P_i\sum\limits_{j=0}^{i}\binom{i}{j}\lambda^j(1-\lambda)^{i-j}
\bigg[\mu\Big(1-p+p\frac{j}{j+1}\Big)+(1-\mu)\frac{j}{j+1}\bigg]\Bigg\} \nonumber\\
\overset{(d)}{=}&\pi_0\Bigg[\sum\limits_{i=1}^{\infty}\frac{\lambda_1^i (K\lambda_2)^K}{(\lambda_1+K\lambda_2)^{K+i}}\frac{\Gamma(K+i)}{\Gamma(i+1)\Gamma(K)}
\sum\limits_{j=1}^{i}\binom{i}{j}\lambda^j(1-\lambda)^{i-j}\frac{j-1}{j}\Bigg] \nonumber\\
+&(1-\pi_0)\Bigg\{\sum\limits_{i=1}^{\infty} \frac{\lambda_1^i (K\lambda_2)^K}{(\lambda_1+K\lambda_2)^{K+i}}\frac{\Gamma(K+i)}{\Gamma(i+1)\Gamma(K)}
\sum\limits_{j=0}^{i}\binom{i}{j}\lambda^j(1-\lambda)^{i-j}\bigg[\mu\Big(1-\frac{p}{j+1}\Big)+\frac{(1-\mu)j}{j+1}\bigg]\Bigg\}.
\end{align}
\hrulefill
\end{figure*}
Here, $P_i$ denotes the probability that there are $i$ packet requests arriving at the AP in a slot and step $(d)$ is obtained according to (\ref{eq5}).

It can be observed that the PLR is jointly decided by the network property (e.g., the user density $\lambda_1$, AP density $\lambda_2$, the transmission power $P$ and the path-loss exponent $\eta$) and  the service property (e.g., the packet request rate $\lambda$, packet size $T$ and the slot duration $\tau$).
\section{numerical results}\label{num}
\begin{table}[t]
\centering
\caption{Typical Parameters For the Simulation}
\begin{tabular}{|c|c|c|}
\hline
Symbols & Notations & Values \\ \hline
$\tau$ &Slot duration&  0.5 s\\ \hline
$B$ &Bandwidth&  10 MHz\\ \hline
$T$ &Packet size&  10 Mbits\\ \hline
$R$ &Radius of interference suppression&  250 m\\ \hline
$\eta$ & Path-loss exponent &  4\\ \hline
$\sigma^2$ & The power of AWGN & 0\\ \hline
$P$ & Transmit power & 33 dBm\\ \hline
$\lambda_1$ & User density & $1000$ nodes/km$^2$\\ \hline
$\lambda_2$ & AP density & $100$ nodes/km$^2$\\ \hline
$\lambda$ & Bernoulli distribution parameter & $0.03$\\ \hline
\end{tabular}
\end{table}

In this section, we verify the analytical results of the AP-empty probability and the PLR with Monte-Carlo simulations. The simulation results are obtained in a square area of 2km $\times$ 2km. Theoretical results and simulation results are illustrated for both the general case and the case with fully loaded APs. Typical parameters are listed in Table I, and they do not change  unless additional notations are clarified.

% The AP-empty probability with respect to users' request rate is illustrated in Fig.\ref{fig1}. The simulation results are well consistent with the theoretical analysis. It can be observed that the AP-empty probability decreases with the increase of users' request rate. This is due to the fact that large $\lambda$ will cause more packets to arrive at the AP in one slot, hence increasing the probability of the AP to be occupied.

\begin{figure}[t]
\centering
\includegraphics[width=8.2cm]{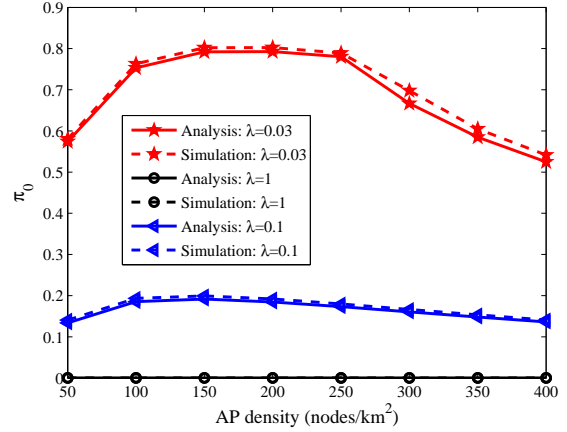}
\caption{AP-empty probability versus AP density.}
\label{fig2}
\end{figure}
First, we conduct the simulation to verify the influence of the AP density on the AP-empty probability. The general case and the case that the APs are full-loaded ($\lambda=1$) are presented in Fig. \ref{fig2}. The simulation results are well consistent with the analytical results. As illustrated in Fig. \ref{fig2}, the AP-empty probability first increases with the increase of the AP density. When the AP density exceeds a critical value, the curve starts to descend and the AP-empty probability decreases with the increase of the AP density. This is due to the fact that when the density of APs is low, the increasing number of APs reduces the average number of users associated with each AP. Therefore, request arrivals at each AP decrease, which enhances the AP-empty probability. However, when the AP density reaches the critical value, the spectrum becomes the bottleneck of the system performance. Due to the limited spectrum and the interference suppression strategy, not every AP can access the channel to transmit its packets in each slot. The increasing number of APs makes it more difficult to access the channel. Therefore, more APs stay idle during one slot. Furthermore, it can be observed that the AP-empty probability decreases with the increase of users' request rate. This is due to the fact that larger $\lambda$ (larger request rate) implies more packet request arrivals at the AP in each slot, increasing the probability of the AP to be occupied.

\begin{figure}[t]
\centering
\includegraphics[width=8.2cm]{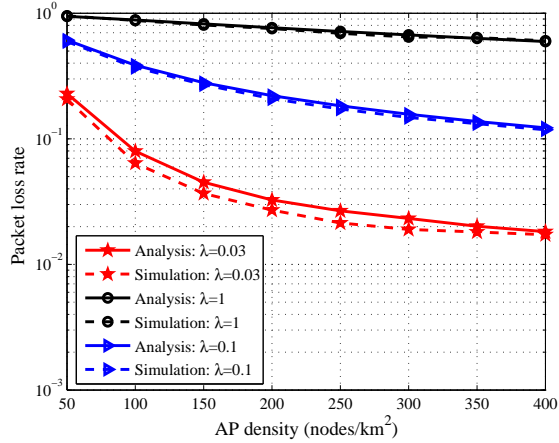}
\caption{Packet loss rate versus the AP density.}
\label{fig3}
\end{figure}

Next, the PLR with respect to the AP density is illustrated in Fig. \ref{fig3}. Compared with the case where the APs are fully loaded ($\lambda=1$), the PLR of the general case ($\lambda=0.03, 0.1$) decreases because of less packet request arrivals during each slot. Moreover, the PLR decreases with the increase of the AP density. This is due to that the increasing number of APs reduces the number of users associated with each AP. Consequently, packets can be transmitted via more APs simultaneously and less packets will be dropped in each slot.

\begin{figure}[t]
\centering
\includegraphics[width=8.2cm]{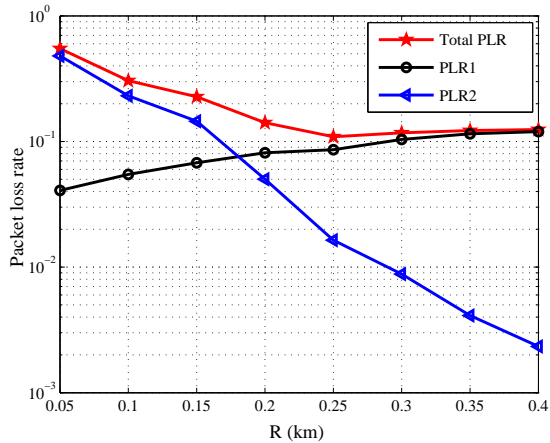}
\caption{Packet loss rate versus the radius $R$.}
\label{fig4}
\end{figure}

Fig. \ref{fig4} reveals the relationship between the PLR and the radius ($R$) of the interference suppression region. PLR1 is the packet loss rate due to the failure of the AP to obtain the channel access opportunity, and PLR2 is that due to the poor quality of the signal when the received SINR falls below the threshold. It can be observed that PLR1 increases with the increase of $R$. This is because when $R$ becomes larger, more APs keep silent, so that more packet arrivals are dropped. However, on the contrary, PLR2 decreases with the increase of $R$. This is due to the fact that larger $R$ enlarges the distance between the simultaneously active APs, reducing the interference to users. Moreover, in our model, the total PLR first decreases and then increases with the increase of $R$. When the radius of the interference suppression region is $250$m, the PLR reaches the minimum value.

\section{conclusion}\label{conc}
Different from most existing works which assume full-load cells, this paper studies the performance of the stochastic network with dynamic traffic. The interference suppression strategy is introduced to reduce the serious interference from the adjacent APs. The influence of the interference suppression region is clarified.  We propose a novel analysis on the intensity of simultaneously active APs based on the AP-idle probability.  We consider the AP-idle probability jointly decided by two factors (having no packets to transmit and failing to obtain the channel access opportunity). The PLR due to the interference (the SINR is lower than the threshold), the limited spectrum and the limited buffer size  is further elaborated.  Simulation results validate the analytical results well.

\bibliographystyle{IEEEtran}
\bibliography{paper}
\end{document}